\definecolor{darkblue}{rgb}{0, 0, 0.5}
\theoremstyle{plain}
  \theoremstyle{plain}
\theoremstyle{plain}
\newtheorem{thm}{\protect\theoremname}
\newtheorem*{thm*}{\protect\theoremname}
  \theoremstyle{plain}
  \theoremstyle{remark}
  \providecommand{\assumptionname}{Assumption}
  \providecommand{\claimname}{Claim}
  \providecommand{\lemmaname}{Lemma}
  \providecommand{\remarkname}{Remark}
\providecommand{\theoremname}{Theorem}
  \providecommand{\examplename}{Example}
\providecommand{\corname}{Corollary}
\title{ Generalized Automatic Least Squares:\\ Efficiency Gains from Misspecified Heteroscedasticity  Models}
\author{
\setcounter{footnote}{1}
 Bulat Gafarov\thanks{
	Department of Agricultural and Resource Economics, University of California, Davis.
Email: \href{bgafarov@ucdavis.edu}{bgafarov@ucdavis.edu}. I would like to thank Colin Cameron, Takuya Ura and  Kaspar W\"uthrich for helpful comments.} 
}
\begin{document}

\maketitle

\begin{abstract}

    \linespread{1.2}

It is well known that in the presence of heteroscedasticity ordinary least squares estimator is not efficient. 
I propose a generalized automatic least squares estimator (GALS) that makes partial correction of heteroscedasticity based on a (potentially) misspecified model without a pretest. 
Such an estimator is guaranteed to be at least as efficient as either OLS or WLS but can provide some asymptotic efficiency gains over OLS  if the misspecified model is approximately correct. 
If the heteroscedasticity model is correct, the proposed estimator achieves full asymptotic efficiency.
The idea is to frame moment conditions corresponding to OLS and WLS squares based on miss-specified heteroscedasticity  as a joint generalized method of moments estimation problem.
The resulting optimal GMM estimator is equivalent to a feasible GLS with estimated weight matrix.
I also propose  an optimal GMM variance-covariance estimator for GALS to account  for any remaining heteroscedasticity in the residuals.

\medskip

\noindent \textbf{JEL Classification:} C30.

\medskip

\noindent \textbf{Keywords:} Linear regression, Heteroscedasticity, Efficient estimation.

\end{abstract}

\section{Introduction}

It is well known that in the presence of heteroscedasticity ordinary least squares estimator is not efficient (see Section 8.4 in  \cite{wooldridge2019introductory}; early proposals were made in \cite{cragg1983more},). 
I propose a generalized automatic least squares estimator (GALS) that makes partial correction of heteroscedasticity based on a (potentially) misspecified model. 
Such  estimator is guaranteed to be at least as asymptotically efficient as either OLS or WLS but can provide some efficiency gains if the misspecified model is approximately correct. 
If the heteroscedasticity model is correct, the proposed estimator achieves full asymptotic efficiency.
The idea is to frame moment conditions corresponding to OLS and WLS squares based on miss-specified heteroscedasticity  as a joint generalized method of moments estimation problem.
The resulting optimal GMM estimator is equivalent to a GLS with estimated weight matrix.
I also propose heteroscedasticity-robust GMM variance-covariance formula to account  for   any remaining heteroscedasticity in the residuals.

The procedure works in three steps. First, one need to  estimate the usual OLS residuals. Then one need to estimate the approximate model for heteroscedasticity. Third, one computes GALS estimator and its covariance matrix using closed-form formulas. All three steps do not involve any optimization and are very easy to compute. 

In a closely related recent paper  \cite{romano2017resurrecting} provide a proof that WLS with estimated (potentially misspecified) heteroscedasticity is heteroscedasticity robust if one applies  \cite{white1980heteroskedasticity} standard errors for the inference. 
Further, they suggest to make a pretest to determine whether  use  OLS or WLS. 
\cite{romano2017resurrecting}  acknowledge that for some misspecified models of heteroscedasticity, WLS is less efficient than OLS, in which case their  resulting adaptive least squares (ALS) procedure would be less efficient than OLS.
My proposal instead is to avoid the pretest and use GALS estimator, which is at least as efficient as OLS or WLS and can even achieve higher efficiency that both of the two estimators.

Other researchers also considered efficiency gains of GMM over OLS  from additional restrictions on higher order moments \citep{im2008more}. In contrast, I do not assume additional properties of the regression residuals beyond the standard exogeneity condition and existence of approximate non-trivial heteroscedasticity model. ( The true model can be homoscedastic in which case GALS is asymptotically equivalent to OLS; the WLS moments will be will be  ignored   asymptotically. )

\section{Model and the procedure}

Consider a linear regression model 
\begin{equation}
    Y_i = X_i'\beta + \varepsilon_i. 
\end{equation}
As usual, $E(\varepsilon_i|X_i)=0$, the vector of regressors $X_i$ has dimension $p$. 
Suppose that $E(\varepsilon_i^2|X_i)=\sigma_\star^2(X_i)= \exp(\sum_{k=1}^\infty \delta_k p_k(X_i))$, where $p_k(\cdot)$ are approximating basis functions like Chebyshev polynomials. 
As long as the heteroscedasticity function is square integrable, we can choose, for example, trigonometric polynomial or other orthogonal basis functions on the support of $X_i$.

Suppose that we pick a flexible model for heteroscedasticity given by formula  $\sigma^2(X_i)$ (for example, $\sigma^2(X_i)= \exp(\sum_{k=1}^K \delta_k p_k(X_i))$).
It is well known that WLS procedure corresponds to moment conditions
\begin{equation}
    E(\varepsilon_i \frac{X_i}{\sigma^2(X_i)}) = 0.\label{eq:WLS}
\end{equation}
If  the functional form is correct, i.e. $\sigma^2(X_i) =  =\sigma_\star^2(X_i)$, then WLS based on \eqref{eq:WLS} would be asymptotically  efficient for estimation of $\beta$ by the Gauss-Markov theorem.
Of course, OLS estimator also corresponds to GMM estimator with moment conditions
\begin{equation}
    E(\varepsilon_i X_i) = 0.\label{eq:OLS}
\end{equation}
Depending on how well we model heteroscedasticity, choice of \eqref{eq:WLS} may or may not improve efficiency over \eqref{eq:OLS}. 
However, if we consider joint GMM estimation with optimal GMM-weights  based on both \eqref{eq:WLS} and \eqref{eq:OLS}, we are guaranteed to be at least as efficient as an estimator that uses only moments \eqref{eq:OLS}(OLS estimator) or moments \eqref{eq:WLS} ( inefficient WLS estimator).
So if our model of heteroscedasticity $\sigma^2(X_i)$ was approximately correct, the joint GMM will be more efficient than OLS by the properties of the optimal GMM estimators.

\subsection{Description of the estimator}

Consider a joint GMM estimator based on both \eqref{eq:WLS} and \eqref{eq:OLS}.
It turns out that it is itself a GLS estimator, which automatically adapts to heteroscedasticity of the residuals (hence the name choice).   It is given by an optimization problem 
\begin{align}
    \hat\beta^{GALS} = \argmin_b \frac{1}{n}\begin{pmatrix}
   \varepsilon'  X^\prime  \\
   \varepsilon'  D  X ^\prime 
\end{pmatrix}  W  \begin{pmatrix}
    X \varepsilon\\
    X D \varepsilon
\end{pmatrix} , 
\end{align}
where $ \varepsilon$ is the vector of residuals, $X$  is  matrix of regressors  with dimension $(k\times n)$, $W$ is the optimal weight matrix  (it is given explicitly in the next subsection), $D = diag(\frac{1}{\sigma^2(X_1)},\dots,\frac{1}{\sigma^2(X_n)})$.
Indeed, after simplification the GLS form becomes explicit, 
\begin{align}
   \hat\beta^{GALS}  &= \argmin_b  \frac{1}{n} \varepsilon' (X' W_{11} X +   X' W_{12} X D +  D X' W_{21} X   + D X' W_{22} X D)  \varepsilon \\
    &= \argmin_b    (  Y - X'b )' A_n  (  Y - X'b ) =  (X   A_n X')^{-1} (X   A_n Y) 
\end{align}
where $A_n = \frac{1}{n}  (X'W_{11} X +   X' W_{12} X D +  D X' W_{21} X   + D X' W_{22} X D)$ is a symmetric $n\times n$ matrix .

\begin{thm} 
Suppose that data is i.i.d., matrix $G = ( E X_iX'_i ,  E \frac{1}{\sigma^2(X_i)} X_i X'_i )'$ has full rank, moment conditions \eqref{eq:WLS}-\eqref{eq:OLS} have a non-degenrate covariance matrix $\Omega$. Then  $ \sqrt{n}(\hat\beta^{GALS}  - \beta ) \overset{d}{\to}N(0, ( G'   W G)^{-1})$.
Moreover, $  \hat\beta^{GALS} $ is weakly more efficient than $ \hat\beta^{OLS}$ and $ \hat\beta^{WLS}$ (i.e. differences between the corresponding asymptotic covariance matrices are positive definite).

\end{thm}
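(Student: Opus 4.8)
The plan is to recognize $\hat\beta^{GALS}$ as the efficient (two--step) GMM estimator for the stacked moment vector
\[
 g_i(b) \;=\; \begin{pmatrix} X_i(Y_i - X_i'b)\\ \sigma^{-2}(X_i)\,X_i(Y_i - X_i'b)\end{pmatrix}, \qquad E\,g_i(\beta)=0 ,
\]
and then to invoke standard GMM asymptotics. Writing $\bar g_n(b)=n^{-1}\sum_i g_i(b)$, the GALS criterion equals $n\,\bar g_n(b)'W\bar g_n(b)$ up to the irrelevant factor $n$, with $W$ (or its feasible version $\hat W$) the optimal weight, i.e.\ $\hat W\overset{p}{\to}\Omega^{-1}$. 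Since $g_i(b)$ is affine in $b$, namely $g_i(b)=g_i(\beta)-\tilde G_i(b-\beta)$ with $\tilde G_i=\big(X_iX_i'\,,\ \sigma^{-2}(X_i)X_iX_i'\big)'$, the first--order condition yields the closed form $\hat\beta^{GALS}-\beta=(\bar{\tilde G}_n'\hat W\bar{\tilde G}_n)^{-1}\bar{\tilde G}_n'\hat W\,\bar g_n(\beta)$, which also reproduces the GLS expression with $A_n$ displayed in the text. Under i.i.d.\ sampling and the stated moment conditions, the LLN gives $\bar{\tilde G}_n\overset{p}{\to}G$, the CLT gives $\sqrt n\,\bar g_n(\beta)\overset{d}{\to}N(0,\Omega)$, and Slutsky then delivers $\sqrt n(\hat\beta^{GALS}-\beta)\overset{d}{\to}N\!\big(0,(G'\Omega^{-1}G)^{-1}\big)$; the full--rank hypothesis on $G$ and non--degeneracy of $\Omega$ guarantee the relevant inverses exist, and $(G'\Omega^{-1}G)^{-1}=(G'WG)^{-1}$ at the optimal weight $W=\Omega^{-1}$. (Replacing the infeasible $W$ by a consistent $\hat W$ built from preliminary residuals -- the covariance estimator proposed later -- does not affect the first--order asymptotics, by the usual two--step GMM argument.)

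For the efficiency comparison, I would partition $G=(G_1',G_2')'$ with $G_1=E\,X_iX_i'$, $G_2=E\,\sigma^{-2}(X_i)X_iX_i'$, and $\Omega=\big(\begin{smallmatrix}\Omega_{11}&\Omega_{12}\\ \Omega_{21}&\Omega_{22}\end{smallmatrix}\big)$, assuming $G_1$ and $G_2$ invertible (as is needed for $\hat\beta^{OLS}$ and $\hat\beta^{WLS}$ to be defined; when $\sigma^2(\cdot)$ is bounded away from $0$ and $\infty$ this is equivalent to the full--rank condition on $G$). Then $\hat\beta^{OLS}$ and $\hat\beta^{WLS}$ are \emph{just--identified} GMM estimators based respectively on the sub--vectors $S_1g_i(\beta)$ and $S_2g_i(\beta)$, where $S_1=[\,I_p\ \ 0\,]$ and $S_2=[\,0\ \ I_p\,]$, so (the weight matrix being irrelevant under just identification) their heteroscedasticity--robust asymptotic variances are the Eicker--White sandwiches $V_{OLS}=G_1^{-1}\Omega_{11}(G_1')^{-1}=\big((S_1G)'(S_1\Omega S_1')^{-1}(S_1G)\big)^{-1}$ and, analogously, $V_{WLS}=\big((S_2G)'(S_2\Omega S_2')^{-1}(S_2G)\big)^{-1}$. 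It therefore suffices to show $G'\Omega^{-1}G\ \succeq\ (S_jG)'(S_j\Omega S_j')^{-1}(S_jG)$ for $j=1,2$.

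This last inequality is the standard ``more moments cannot hurt the efficient GMM estimator'' fact, which I would establish by a projection argument: write $\Omega^{-1}=LL'$ with $L$ invertible, set $M=L'G$ and $N_j=L^{-1}S_j'$ (both of full column rank $p$). Then $G'\Omega^{-1}G=M'M$, while $S_jG=N_j'M$ and $S_j\Omega S_j'=N_j'N_j$, so $(S_jG)'(S_j\Omega S_j')^{-1}(S_jG)=M'N_j(N_j'N_j)^{-1}N_j'M=M'\Pi_jM$ with $\Pi_j$ the orthogonal projector onto the column space of $N_j$. Hence $G'\Omega^{-1}G-(S_jG)'(S_j\Omega S_j')^{-1}(S_jG)=M'(I-\Pi_j)M\succeq 0$, giving $V_{OLS}-V_{GALS}\succeq0$ and $V_{WLS}-V_{GALS}\succeq0$. (The difference is strict unless the omitted block of moments is asymptotically redundant given the retained one -- e.g.\ under homoscedasticity, where GALS collapses to OLS -- so ``weakly more efficient'' is the precise statement.)

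These steps are essentially routine once the GMM framing is in place; the points that need care are (i) verifying the moment/regularity conditions behind the LLN and CLT for $g_i$ and the consistency of $\hat W$ and the proposed $\hat\Omega$ under the terse hypotheses (finite relevant moments of $X_i\varepsilon_i$ and of $\sigma^{-2}(X_i)X_i\varepsilon_i$, $\sigma^2(\cdot)$ bounded away from $0$), and (ii) insisting that the OLS/WLS benchmarks be the heteroscedasticity--robust variances, since against the homoscedastic--formula variances the comparison need not hold. I expect (i) -- pinning down exactly which moment bounds make $\Omega$ finite and $\hat\Omega$ consistent -- to be the main (though still standard) obstacle.
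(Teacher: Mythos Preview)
Your proposal is correct and follows the same approach as the paper: both derive the result from standard optimal--GMM asymptotics, the only difference being that the paper's proof is a one--line citation to Theorem~3.4 in Hall (2004), whereas you spell out the LLN/CLT/Slutsky steps for asymptotic normality and supply the projection argument behind the ``adding moments cannot hurt'' inequality. Your cautionary remarks about regularity conditions and about comparing to the heteroscedasticity--robust (not homoscedastic) OLS/WLS variances are well taken and go slightly beyond what the paper makes explicit.
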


\begin{proof} 
  Follows from the efficiency of GMM with optimal weights $W$  (see Theorem 3.4 in \cite{hall2004generalized}).
\end{proof}
In practice, of course, $A_n$ has to be replaced with a consistent estimator, which is discussed in the next subsection. 
\subsection{Computation of matrix $A_n$}
As usual, the optimal GMM weight matrix $W$ is equal to inverse of the covariance matrix  $\Omega$ of the moment conditions, which is given by
\begin{align}
        \Omega = \begin{pmatrix}
  E \varepsilon^2_i X_iX'_i&  E \frac{1}{\sigma^2(X_i)}\varepsilon^2_i X_i X'_i\\
  E \frac{1}{\sigma^2(X_i)}\varepsilon^2_i X_i X'_i &  E \frac{1}{\sigma^4(X_i)}\varepsilon^2_i X_i X'_i
\end{pmatrix}.
\end{align}
Suppose that matrices $  \Omega_{11} = E \varepsilon^2_i X_iX'_i$ and $(\Omega_{22}- \Omega_{12}  \Omega_{11}^{-1}  \Omega_{12}) $ are invertible. 
Then using Theorem 2.1 from \cite{lu2002inverses} we can obtain eplicit block representaiton of the inverse,
{\tiny
\begin{align}
       W = \Omega^{-1} =  \begin{pmatrix}
 \Omega_{11}^{-1} + \Omega_{11}^{-1} \Omega_{12}(\Omega_{22}- \Omega_{12}  \Omega_{11}^{-1}  \Omega_{12})^{-1} \Omega_{12}\Omega_{11}^{-1}  
 & - \Omega_{11}^{-1} \Omega_{12}(\Omega_{22}- \Omega_{12}  \Omega_{11}^{-1}  \Omega_{12})^{-1}  \\
 -(\Omega_{22}- \Omega_{12}  \Omega_{11}^{-1}  \Omega_{12})^{-1} \Omega_{12} \Omega_{11}^{-1}  &  (\Omega_{22}- \Omega_{12}  \Omega_{11}^{-1}  \Omega_{12})^{-1} 
\end{pmatrix}
\end{align}
}

So matrix $A_n$ takes explicit form
\begin{align}
    A_n =& \frac{1}{n} \big(X' (\Omega_{11}^{-1} + \Omega_{11}^{-1} \Omega_{12}(\Omega_{22}- \Omega_{12}  \Omega_{11}^{-1}  \Omega_{12})^{-1} \Omega_{12}\Omega_{11}^{-1} ) X   \\
    &- X' ( \Omega_{11}^{-1} \Omega_{12}(\Omega_{22}- \Omega_{12}  \Omega_{11}^{-1}  \Omega_{12})^{-1}) X D   \\
    &-  D X'( \Omega_{11}^{-1} \Omega_{12}(\Omega_{22}- \Omega_{12}  \Omega_{11}^{-1}  \Omega_{12})^{-1}) X   \\
    &+ D X'(\Omega_{22}- \Omega_{12}  \Omega_{11}^{-1}  \Omega_{12})^{-1} X D\big)
\end{align}

A feasible version of this matrix is using blocks from 

\begin{align}
        \hat \Omega =\frac{1}{n} \begin{pmatrix}
  \sum_{i=1}^{n}\hat\varepsilon^2_i X_iX'_i&  \sum_{i=1}^{n} \frac{1}{\hat\sigma^2(X_i)}\hat\varepsilon^2_i X_i X'_i\\
  \sum_{i=1}^{n} \frac{1}{\hat\sigma^2(X_i)}\hat\varepsilon^2_i X_i X'_i &  \sum_{i=1}^{n} \frac{1}{\hat\sigma^4(X_i)}\hat\varepsilon^2_i X_i X'_i
\end{pmatrix}
\end{align}
instead of $\Omega$ and  $\hat D  = diag(\frac{1}{\hat\sigma^2(X_1)},\dots,\frac{1}{\hat\sigma^2(X_n)})$ instead of $D$. 
Here $\hat\sigma^2(X_i) = \exp(\sum_{k=1}^K \hat\delta_k p_k(X_i) )$ with $\hat \delta_k$ given by OLS regression of $\log \hat \varepsilon_i^2$ on $( p_1(X_i),\dots, p_k(X_i))$.
The residuals $  \hat \varepsilon_i $ come from OLS regression of $Y_i$ on $X_i$.
Under regularity conditions on the moments that ensure consistency $\hat\delta_k$ and of $\hat\Omega$, one can replace $A_n$ with $\hat A_n $ in Theorem 1.


\bibliographystyle{ecta}
\bibliography{main}
 


\end{document}